\begin{document}

\title{Diagrammatic proof of the large $N$ melonic dominance in the SYK model}

\titlerunning{SYK model - Diagrammatic proof of the large $N$ melonic dominance}

\author{V. Bonzom\thanks{V.~Bonzom is partially supported by the CNRS Infiniti ModTens grant and by the ANR MetAConC project ANR-15-CE40-0014.} \and V. Nador\thanks{V. Nador is fully supported by CNRS Infini ModTens grant.} \and A. Tanasa\thanks{A.~Tanasa is partially supported by the CNRS Infiniti ModTens grant.}}


\institute{V. Bonzom \at
              LIPN, UMR CNRS 7030, Institut Galil\'ee, Universit\'e Paris 13,
99, avenue Jean-Baptiste Cl\'ement, 93430 Villetaneuse, France, EU \\
              \email{bonzom@lipn.univ-paris13.fr}           
           \and
           V. Nador \at
              LaBRI, Universit\'e de Bordeaux, 351 cours de la Lib\'eration, 33405 Talence, France, EU
              \email{victor.nador@ens-lyon.fr}
           \and 
           A. Tanasa \at
              LaBRI, Universit\'e de Bordeaux, 351 cours de la Lib\'eration, 33405 Talence, France, EU
              \email{ntanasa@u-bordeaux.fr}
}

\date{Received: date / Accepted: date}

\maketitle

\begin{abstract}
A crucial result on the celebrated Sachdev-Ye-Kitaev model is that its large $N$ limit is dominated by melonic graphs. In this letter we offer a rigorous, diagrammatic proof of that result by direct, combinatorial analysis of its Feynman graphs.

\keywords{Sachdev-Ye-Kitaev model\and melonic graphs\and tensor models}
\end{abstract}

\section*{Introduction} \label{intro}

The Sachdev-Ye model \cite{Sachdev_Ye} was initially introduced within a condensed matter perpsective in the early nineties. Kitaev~\cite{SY_Kitaev}, in a series of talks, exposed its connection to the celebrated AdS/CFT correspondence. The model is now known as the Sachdev-Ye-Kitaev (SYK) model.

It has attracted a huge deal of interest from both condensed matter, see for example \cite{georges}, and high energy physics communities, see for example \cite{SYK_MS}, \cite{SYK_PR}, \cite{SYK_GR}, \cite{BLT}, \cite{quenched}, \cite{Dario1}, \cite{Dario2} and references within.

A crucial property of the SYK model is that the large $N$ limit, where $N$ is the number of fermions, is dominated by the set of melonic graphs. Remarkably, those graphs had been known to dominate the large $N$ limit of random tensor models \cite{gurau-book}, $N$ being here the size of the tensor. This is true for the Gurau's colored tensor model \cite{bonzom}, Tanasa's multiorientable model \cite{mo}, \cite{tanasar}, Tanasa and Carrozza's model \cite{ct}. Even for tensor models whose large $N$ limit does no consist of melonic graphs, the universality class of melonic graphs is easily stumbled upon \cite{bonzomr}. 

The large $N$ melonic dominance is mentioned in Kitaev's original talk or in the seminal article \cite{SYK_MS}, to name only a few of the very well-known references mentioning it. It is a direct cause of interesting propoerties of the model, in particular the form of the Schwinger-Dyson equation for the 2-point function, which leads to conformal invariance in the IR.

The large $N$ dominance of melonic graphs in both the SYK model and tensor models triggered interesting developments, starting from the Gurau-Witten model \cite{GurauWitten} and the CTKT model \cite{CTKT}. This has motivated $1/N$ expansions for new tensorial models \cite{IrreducibleTensorsCarrozza}, \cite{SymmetricTraceless}, \cite{TensorialGrossNeveu}, \cite{TwoSymmetricTensors} and the new field of tensor quantum mechanics \cite{O4}, \cite{Sextic}, \cite{TensorQM}, \cite{SpectraTensorModels}.

A proposal which is very close to both the SYK model and unitary-invariant tensor models is the so-called colored SYK model, initially introduced in \cite{SYK_GR} and \cite{complete}, where the fermions come in several flavors, or colors, and only different colors can interact. In this model and in the Gurau-Witten model, combinatorial methods originating from the study of tensor models can be used, for instance to identify the Feynman graphs which contribute at a given order in the $1/N$ expansion \cite{complete}, \cite{BLT}.

However, combinatorial proofs have been of limited use so far in the original SYK model. One reason is that colors (or flavors) have been crucial to most combinatorial results in tensor models but the Feynman graphs of the SYK model have no colors (only recently new methods have been found to deal with tensor models without colors \cite{IrreducibleTensorsCarrozza}, \cite{SymmetricTraceless}, \cite{TensorialGrossNeveu}, \cite{TwoSymmetricTensors}, but have not been applied to the original SYK model to the best of our knowledge).

In this letter we give a mathematically rigorous proof the dominance of melonic graphs of the SYK model, Theorem \ref{thm}. One might expect it to be difficult without the notion of colors. As it turns out, even for graphs without colors, the set of melonic graphs is simple enough that our direct, combinatorial approach remains quite elementary. It revolves around the fact that ultimately, melonic graphs are characterized by their 2-cuts, so we can study the large $N$ behavior of Feynman graphs under 2-cuts.

We present the set of Feynman graphs to be studied in Section \ref{sec:DiagExpSYK}, then we intorduce melonic graphs and prove Theorem \ref{thm} in Section \ref{sec:Proof}.

%

\section{Diagrammatic expansion of the SYK model} \label{sec:DiagExpSYK}

\subsection{Definition of the SYK model}
\label{sec:Def}

The SYK model has $N$ Majorana fermions coupled via a $q$-body random interaction
\begin{equation}
H_{SYK} = i^{q/2} \sum_{i_1, \dotsc, i_q} J_{i_1\dotsb i_q} \psi_{i_1}(t) \dotsm \psi_{i_q}(t),
\label{H_SYK}
\end{equation}
where $J_{i_1\dotsb i_q}$ is the coupling constant. Quenching means that it is a random tensor with a Gaussian distribution such that 
\begin{equation} \label{Covariance}
\langle J_{i_1\dotsb i_q}\rangle=0 \qquad \text{and} \qquad \langle J_{i_1\dotsb i_q} J_{j_1\dotsb j_q}\rangle = 6 J^2 N^{-(q-1)} \prod_{m=1}^{q} \delta_{i_m,j_m}
\end{equation}
The fields $\psi_i(t)$ satisfy fermionic anticommutation relation $\{\psi_i(t),\psi_j(t)\} = \delta_{i,j}$.

\subsection{Stranded structure of the Feynman graphs}
\label{sec:Stranded}

When doing perturbation theory, the interaction term is represented by a vertex with $q$ incident fermionic lines. Each fermionic line $m=1\dotsc, q$ carries an index $i_m=1, \dotsc, N$ which is contracted at the vertex with a coupling constant $J_{i_1 \dotsb i_q}$. The free energy expands onto those connected, $q$-regular graphs.

Averaging over the disorder is performed using Wick contractions between the $J$s sitting on vertices, with covariance \eqref{Covariance}. Each vertex thus receives an additional line, which we represent as a dashed line and call a {\bf disorder line}. An example of such a Feynman graph of the SYK model is given  in Fig.~\ref{SYK_graph}.

\begin{figure}[ht]
\centering
\includegraphics[scale=0.65]{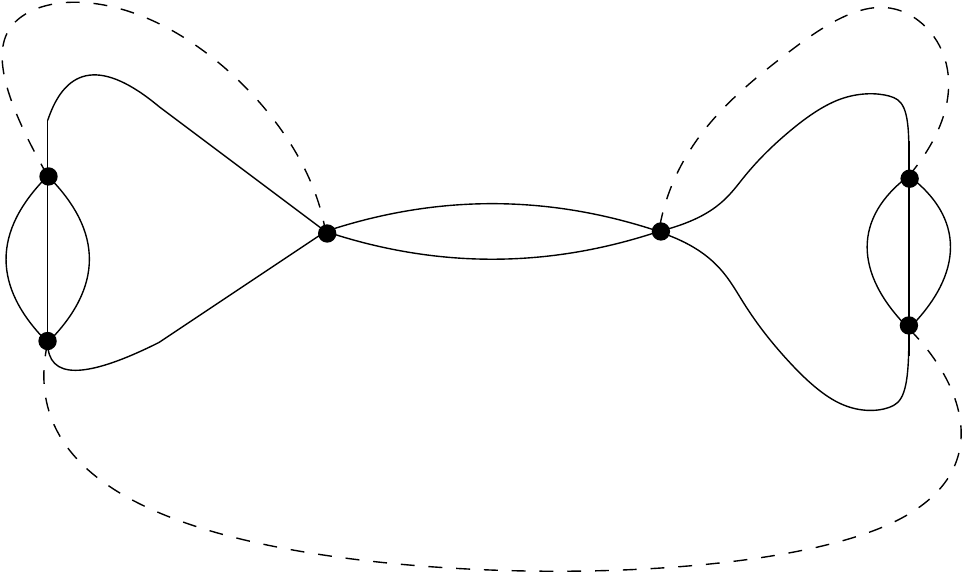}
\caption{An example of a Feynman graph of the $q=4$ SYK model.}
\label{SYK_graph}
\end{figure}

The above description of the Feynman graphs is however not enough to describe the $1/N$ expansion as it ignores the indices of the random couplings. Indeed a disorder line propagates in fact $q$ field indices, where the field index of fermionic line incident on a vertex is identified with the index of a fermionic line at another vertex. We thus have to represent a disorder line as a line made of $q$ strands where each strand connects fermionic lines as follows
\begin{equation}
\langle J_{i_1\dotsb i_q} J_{j_1\dotsb j_q}\rangle = \begin{array}{c} \includegraphics[scale=.4]{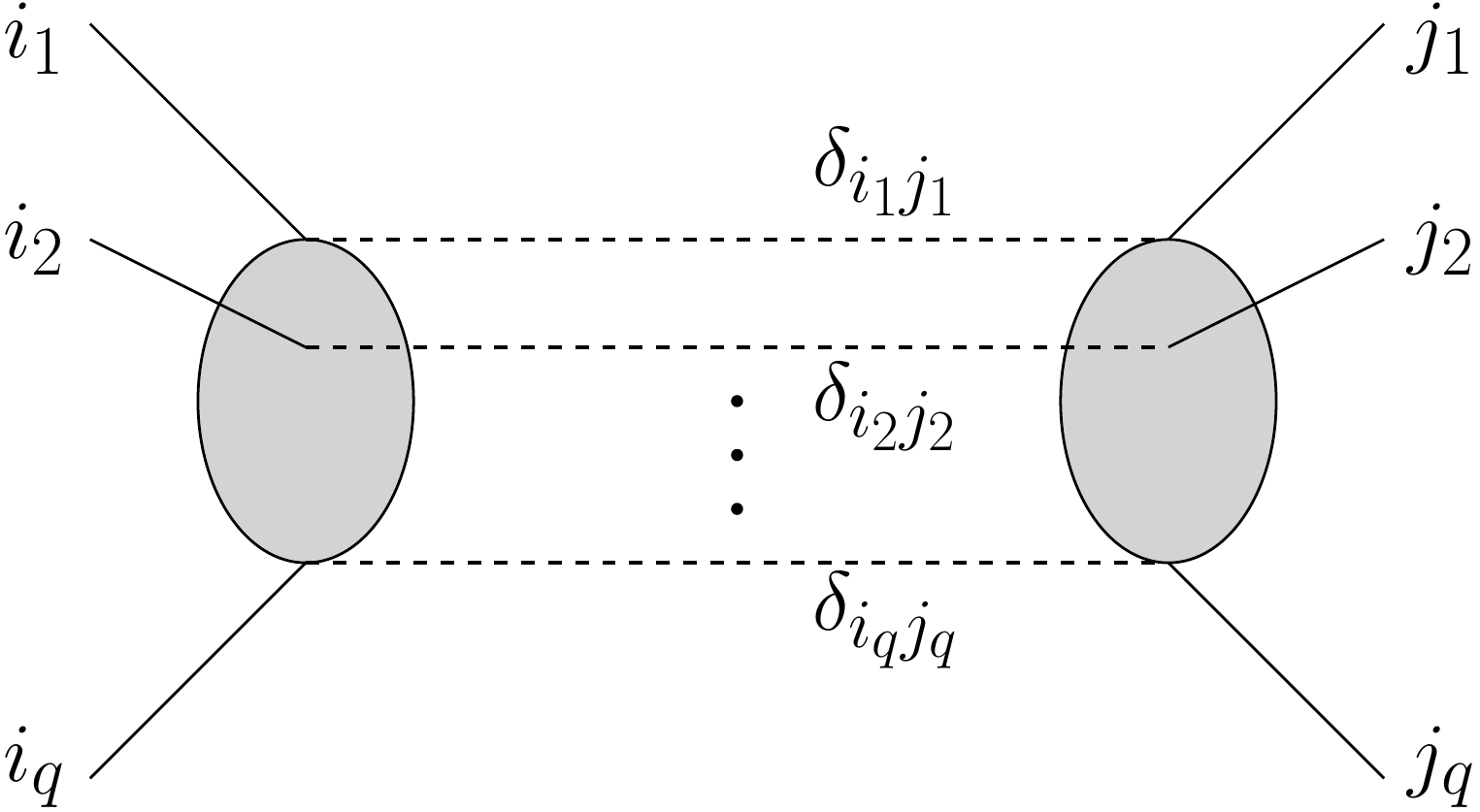} \end{array}
\end{equation}
Here the grey discs represent the Feynman vertices.

We denote $\mathbbm{G}$ the set of Feynman graphs of the SYK model with stranded disorder lines. For $G\in \mathbbm{G}$, we further denote $G_0\subset G$ the $q$-regular graph obtained by removing the strands of the disorder lines, see Fig.~\ref{SYK_graph0}. Due to the quenching averaging the fermionic free energy over the disorder, $G_0$ is connected. Moreover, due to the Wick contractions, each vertex has exactly one incident disorder line. This implies that $G$, hence $G_0$, have an even number of vertices.
 
\begin{figure}[ht]
\centering
\includegraphics[scale=0.75]{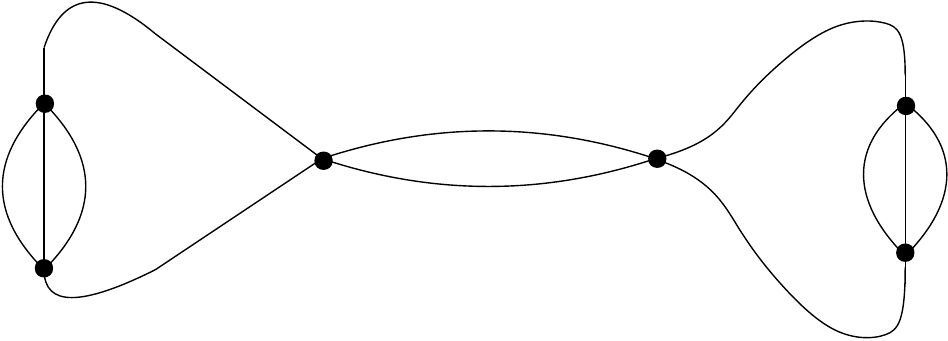}
\caption{The graph obtained after deleting of the disorder lines of the graph of Fig.~\ref{SYK_graph}.}
\label{SYK_graph0}
\end{figure}

There is a single graph with two vertices
\begin{equation}
G_{\min} = \begin{array}{c} \includegraphics[scale=.45]{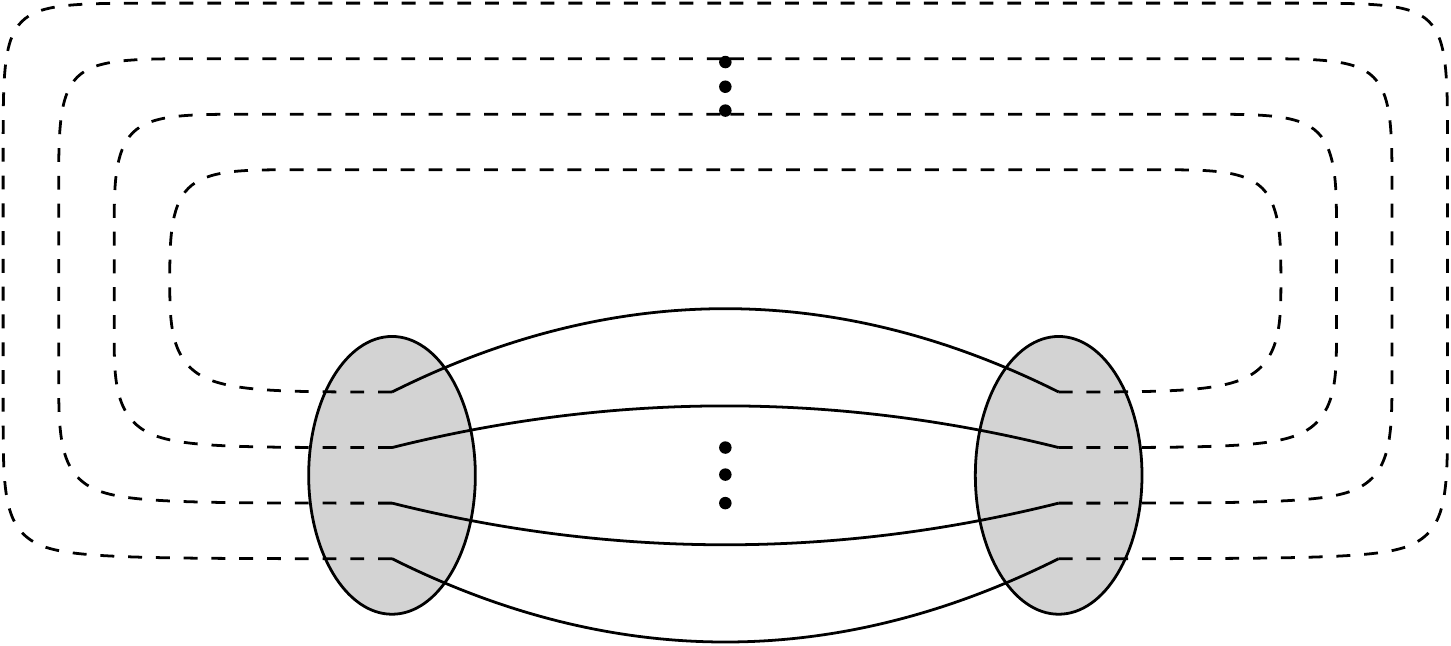} \end{array}
\end{equation}
and $G_{0,\min} = \begin{array}{c} \includegraphics[scale=.25]{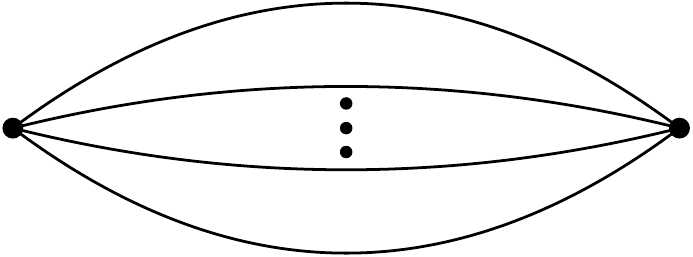} \end{array}$.

The field index is preserved along each fermionic line and along each strand of disorder lines. This means that there is a free sum for each cycle made of those lines, thereby contributing to a factor $N$.

\begin{definition}[Faces]
A cycle made of alternating fermionic lines and strands of disorder lines is called a face. We denote $F(G)$ the number of faces of $G\in\mathbbm{G}$.
\end{definition}

$G\in\mathbbm{G}$ thus receives a factor $N$ per face. It also receives a factor $N^{-(q-1)}$ for each disorder line. The weight of a graph is thus
\begin{equation} \label{scaling}
W(G) = N^{\delta(G)} \qquad \text{with} \qquad \delta(G) = F(G) - (q-1) V(G)/2
\end{equation}
where $V(G)$ is the number of vertices. To find the large $N$ limit, we can thus find the graphs which maximize the number of faces at fixed number of vertices.

\section{Proof of the melonic dominance in the large $N$ limit} \label{sec:Proof}

\subsection{Melonic graphs} \label{sec:Melons}

\begin{definition}[Melonic graphs]
A melonic move is the insertion, on a fermionic line, of the following 2-point graph
\begin{equation}
\begin{array}{c} \includegraphics[scale=.4]{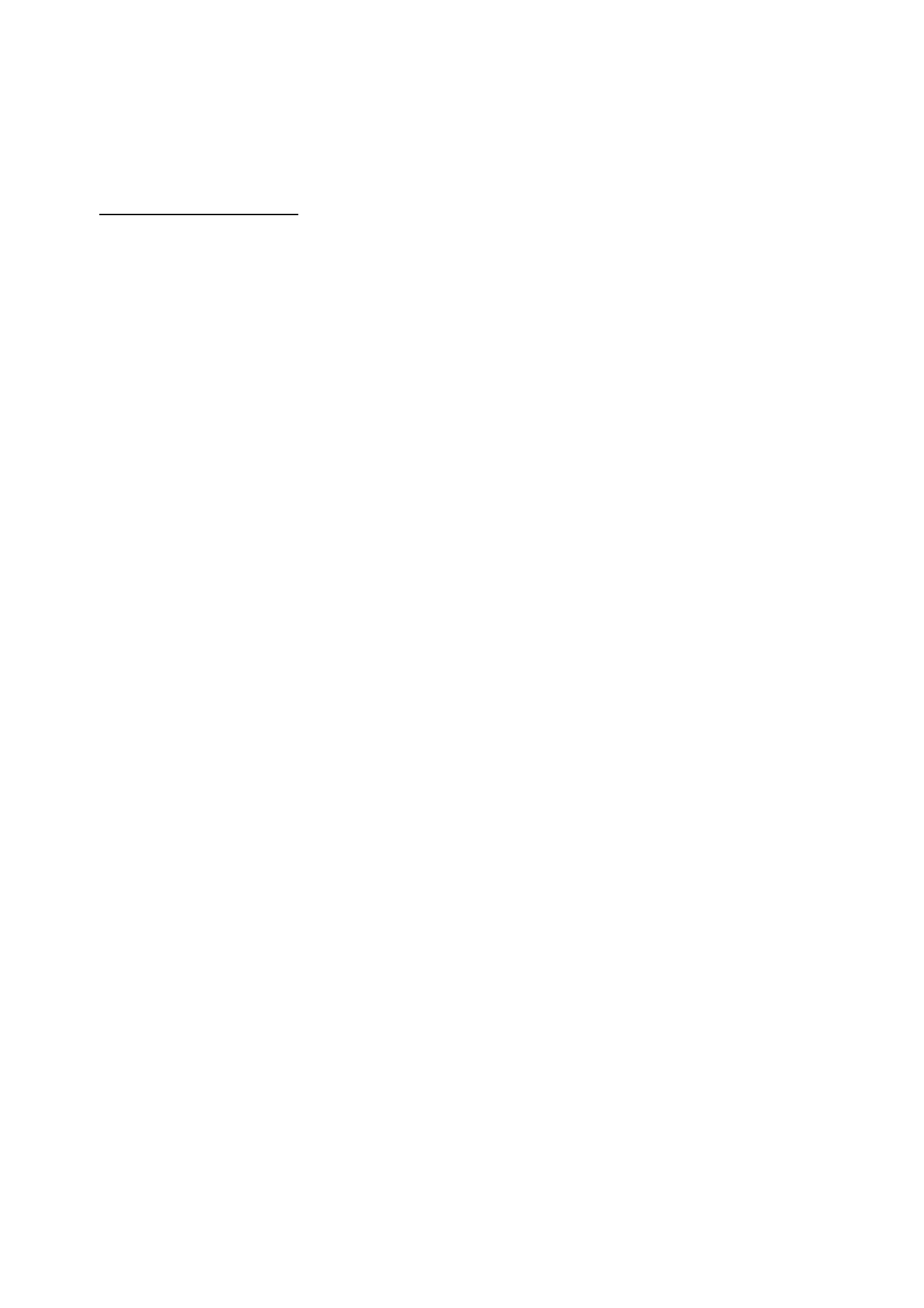} \end{array} \qquad \to \qquad \begin{array}{c} \includegraphics[scale=0.45]{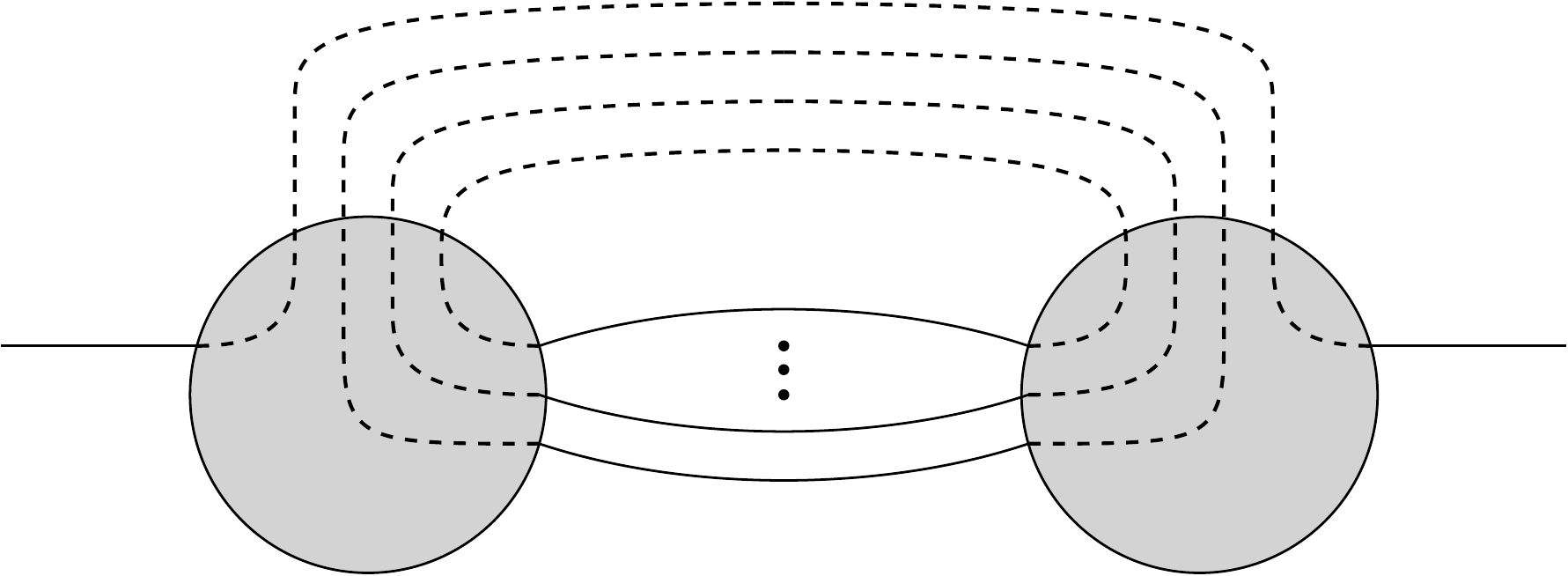} \end{array}
\end{equation}
A melonic graph is a graph obtained from $G_{\min}$ by repeated melonic moves. An example is provided in Fig.~\ref{SYK_melon_ex}.
\end{definition}

\begin{figure}[ht]
\centering
\includegraphics[scale=0.7]{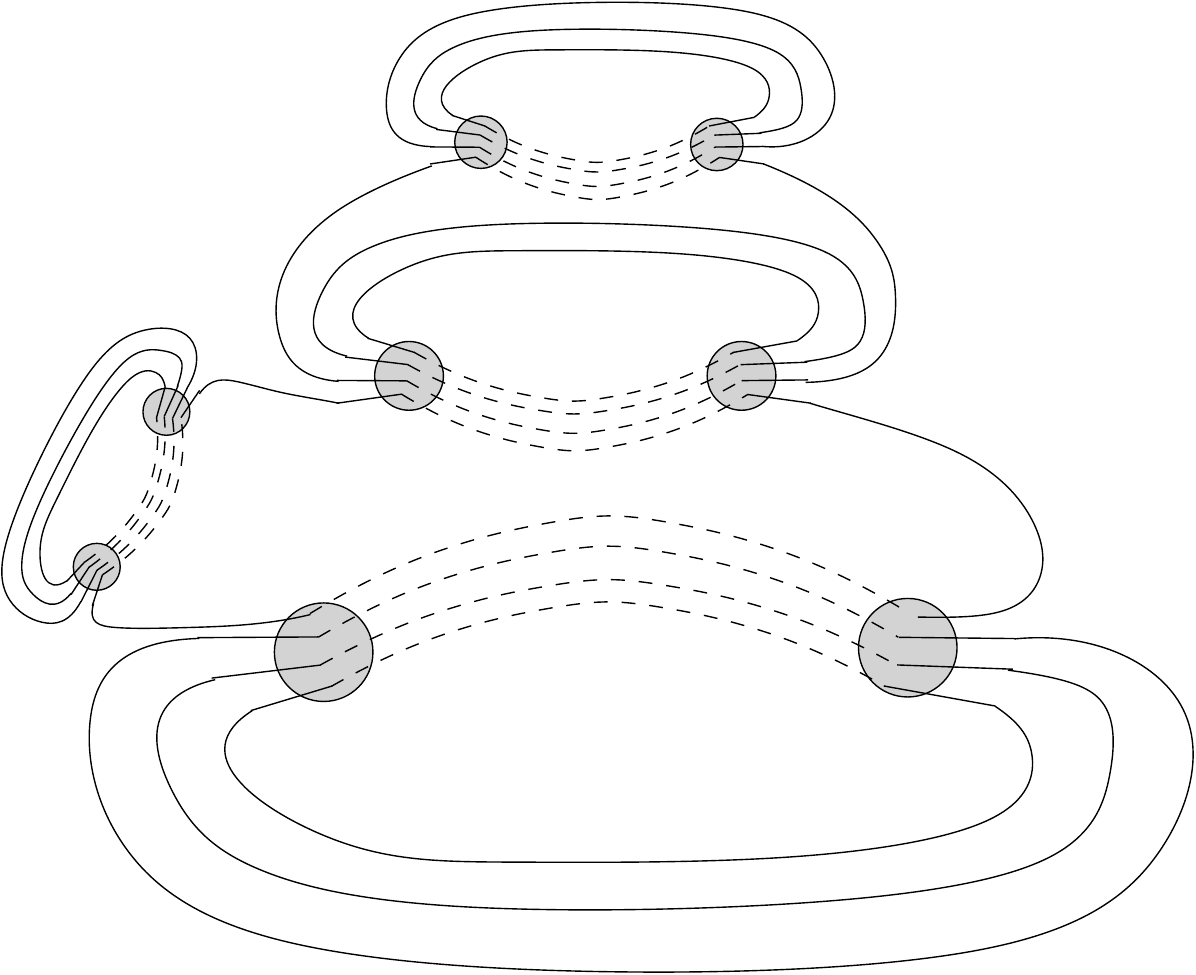}
\caption{An example of melonic graph}
\label{SYK_melon_ex}
\end{figure}

The number of faces of melonic graphs is easily found.

\begin{proposition} \label{prop:Melons}
A melonic move adds two vertices and $q-1$ faces. The number of faces of melonic graphs is
\begin{equation}
F(G) = q + (q-1)\frac{V(G)-2}{2}
\end{equation}
hence $\delta(G)=1$ for melonic graphs.
\end{proposition}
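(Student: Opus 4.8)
The plan is to proceed by induction on the number of melonic moves, exploiting the additivity of both the vertex count and the face count under a single move.

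First I would analyze the face count of one melonic move, which is the heart of the statement. Denote by $v_1,v_2$ the two vertices introduced when the $2$-point graph is inserted on a fermionic line $f_0$, and recall that they are joined by one disorder line, carrying $q$ strands, together with $q-1$ internal fermionic lines; the remaining fermionic half-edge at each of $v_1,v_2$ is glued to one of the two halves of the cut line $f_0$. Labelling the fermionic half-edges at $v_1$ by $a_1,\dots,a_q$ and those at $v_2$ by $b_1,\dots,b_q$, the covariance \eqref{Covariance} forces strand $m$ of the disorder line to join $a_m$ to $b_m$. With the convention that $a_1,b_1$ carry the two halves of $f_0$, each internal fermionic line joins $a_m$ to $b_m$ for $m\in\{2,\dots,q\}$.

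Next I would trace the faces through this local picture. For every $m\in\{2,\dots,q\}$ the internal fermionic line and the disorder strand $m$ both join $a_m$ to $b_m$, so together they close into a bigon; this produces exactly $q-1$ new faces. For $m=1$, strand $1$ merely reconnects the two halves of $f_0$ through $v_1$ and $v_2$, so the unique face of the ambient graph running through $f_0$ is lengthened but neither split nor destroyed. Hence a melonic move creates precisely $q-1$ faces and, by construction, two vertices. The point that must be pinned down here is exactly this bookkeeping: one must check that no face besides the $q-1$ bigons is created and that the pre-existing face through $f_0$ survives unchanged. This follows from the fact that in the stranded structure every fermionic half-edge has degree two (one incident fermionic edge and one incident disorder strand), so that faces are the cycles of a $2$-regular object and the surgery is purely local. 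I expect this local count to be the only genuine obstacle; everything downstream is immediate.

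Finally I would assemble the formula. The base case is $G_{\min}$: its two vertices are joined by $q$ fermionic lines and one $q$-strand disorder line, and pairing the $m$-th fermionic line with strand $m$ yields $q$ bigons, so $F(G_{\min})=q$, which matches the claimed formula at $V=2$. Since each move increments $V$ by $2$ and $F$ by $q-1$, after $(V-2)/2$ moves one gets $F(G)=q+(q-1)(V(G)-2)/2$. Substituting into \eqref{scaling} then gives $\delta(G)=q+(q-1)(V(G)-2)/2-(q-1)V(G)/2=q-(q-1)=1$, completing the proof.
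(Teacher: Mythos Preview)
Your proof is correct and follows the same inductive strategy as the paper: establish that a single melonic move adds two vertices and $q-1$ faces, check the base case $F(G_{\min})=q$, and conclude by induction and substitution into \eqref{scaling}. The only difference is that you spell out the strand-by-strand bookkeeping for the melonic insertion and for $G_{\min}$, whereas the paper declares the first statement ``trivial from the definition of the melonic move''; your added detail is accurate and does not change the argument.
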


\begin{proof}
The first statement is trivial from the definition of the melonic move. The number of faces is then obtained by induction. Indeed, $F(G_{\min}) = q$ at $V(G)=2$ for $G_{\min}$ the only melonic graph with two vertices. The induction is completed by using the first statement. $\delta(G)=1$ follows from the expression of $\delta(G)$ in \eqref{scaling}. \qed
\end{proof}

Melonic graphs satisfy a gluing rule which generalizes the melonic move (and originates from it obviously).

\begin{proposition} \label{prop:MelonicGluing}
Let $G_1, G_2\in\mathbbm{G}$ be two melonic graphs and $e_1$ in $G_1$, $e_2$ in $G_2$ two fermionic lines. Cut open $G_1$ on $e_1$ to get a 2-point function $G_1^{(e_1)}$ and similarly with $G_2$. There are two ways to glue $G_1^{(e_1)}$ and $G_2^{(e_2)}$. The resulting graphs, $G_1^{(e_1)} \star G_2^{(e_2)}$ and $G_1^{(e_1)} \bar{\star} G_2^{(e_2)}$, are melonic.
\end{proposition}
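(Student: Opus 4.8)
The plan is to reduce the statement to the elementary melonic move and then to induct. First I would observe that gluing two cut-open $2$-point functions is the same operation as inserting one of them into a fermionic line of the other. Indeed, cutting $G_1$ on $e_1$ produces two half-lines, the external legs of $G_1^{(e_1)}$, and cutting $G_2$ on $e_2$ produces the two external legs of $G_2^{(e_2)}$; the graph $G_1^{(e_1)} \star G_2^{(e_2)}$ is obtained by reconnecting the two half-lines of $e_1$ to the two external legs of $G_2^{(e_2)}$ in one way, and $G_1^{(e_1)} \bar{\star} G_2^{(e_2)}$ by reconnecting them in the opposite way. Thus it suffices to prove the following reformulation: inserting a melonic $2$-point function into any fermionic line of a melonic graph, in either of the two orientations, yields a melonic graph.

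Second, I would use a recursive description of melonic $2$-point functions: every melonic $2$-point function is obtained from the elementary melon $G_{\min}^{(e)}$ (the two vertices of $G_{\min}$ joined by a disorder line and $q-1$ fermionic lines, cut open on one fermionic line) by a finite sequence of melonic moves, each performed on an \emph{internal} fermionic line, so that the two external legs are never touched. Granting this description, I would induct on the number of such moves. In the base case the inserted $2$-point function is the elementary melon itself, and inserting it into a fermionic line of $G_1$ is exactly one elementary melonic move; since $G_1$ is melonic the result is melonic by definition, and the two orientations coincide because the elementary melon is invariant under the exchange of its two external legs (the automorphism swapping its two vertices).

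For the inductive step I would write the inserted $2$-point function $M$ as a smaller melonic $2$-point function $M'$ followed by one melonic move on an internal line $g$ of $M'$. Because $g$ is internal, this last move does not interfere with the way the external legs of $M'$ attach to $G_1$, so inserting $M$ into $e_1$ equals inserting $M'$ into $e_1$ and then performing one more melonic move on $g$. By the induction hypothesis the insertion of $M'$ is melonic in either orientation, and since the external legs of $M$ and $M'$ are identical, the two orientations of $M$ correspond precisely to the two orientations of $M'$; one extra melonic move preserves melonicity, closing the induction for both $\star$ and $\bar{\star}$.

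The step I expect to be the main obstacle is establishing the recursive description itself, namely that cutting a melonic graph on an \emph{arbitrary} fermionic line always produces a $2$-point function rebuildable from the elementary melon by internal melonic moves. The delicate case is when the cut line is one of the fermionic lines created by some melonic move in the construction of the closed graph: one then has to reorganize the sequence of moves so that, after cutting, every remaining move acts on an internal line at the stage it is applied. Verifying that melonic moves can be reordered and commuted in this way is the genuine combinatorial content behind what is otherwise a routine induction.
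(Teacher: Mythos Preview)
Your approach is essentially the same as the paper's: both induct on the size of the inserted two-point function, use the elementary melon as base case, and commute the last melonic move past the gluing. The ``main obstacle'' you flag --- reorganising the moves when the cut line $e_1$ was created by the last melonic move --- is precisely the case the paper handles by identifying $e_1$ with the pre-split line $e_1'$ in $G_1'$, so your instinct about where the genuine content lies is correct and the resolution is straightforward.
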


\begin{proof}
It is sufficient to consider one of the two gluings, say $G_1^{(e_1)} \star G_2^{(e_2)}$. This is proved by induction on the number of vertices of $G_1$. If $G_1$ has two vertices, $G_1 = G_{\min}$ and the insertion of $G_1^{(e_1)}$ is the melonic move on $e_2$. 

Assume the proposition holds for graphs $G_1$ with $V-2$ vertices and consider a new melonic $G_1$ with $V$ vertices. It is obtained from a melonic move performed on the melonic graph $G_1'$ on the fermionic line $e_1'$. The idea is then to find the line $e_1$ in $G_1'$, form $G_1^{'(e_1)}\star G_2^{(e_2)}$ which is melonic from the induction hypothesis and then perform the melonic move on $e_1'$ to get $G_1^{(e_1)}\star G_2^{(e_2)}$ which will thus be melonic too. This is summarized in the following commutative diagram
\begin{equation}
\begin{array}{c} \includegraphics[scale=.4]{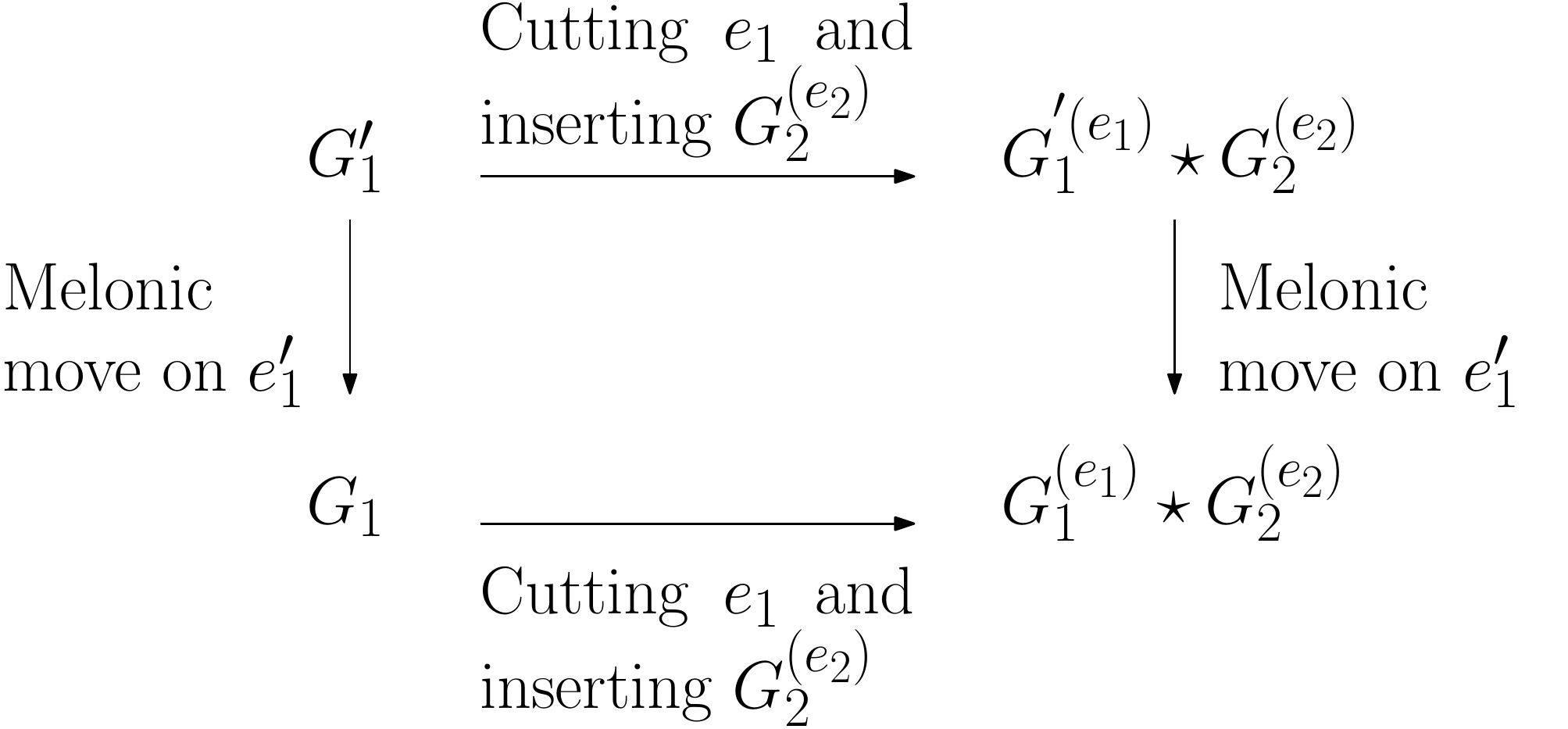} \end{array}
\end{equation}
We thus want to use the path from $G_1'$ to $G_1^{(e_1)}\star G_2^{(e_2)}$ which goes right then down. To do so, since $e_1$ is defined in $G_1$ and $e_1'$ in $G_1'$, we have to identify the equivalent lines, $e_1$ in $G_1'$ on which to open it, and $e_1'$ in $G_1^{'(e_1)}\star G_2^{(e_2)}$ on which to perform a melonic insertion.

Notice that when the melonic move is performed on $e_1'$, the latter is split into two, $e_{1L}$ and $e_{1R}$ on each side of the melonic insertion in $G_1$.

If $e_1 \neq e_{1L}, e_{1R}$, then there is a well-identified $e_1$ in $G_1'$ such that performing the melonic move on $e_1'$ in the 2-point graph $G_1^{'(e_1)}$ gives $G_1^{(e_1)}$. Notice that $e_1'$ is identified in $G_1^{'(e_1)}$ by trivial inclusion. We can then consider $G_1^{'(e_1)}\star G_2^{(e_2)}$ to be melonic and perform the melonic move on $e_1'$.

If $e_1 = e_{1L}$ (or $e_{1R}$), it means that we can identify $e_1$ with $e_1'$ in $G_1'$ and reproduce the above reasoning.\qed
\end{proof}

\subsection{2-cuts} \label{sec:2Cuts}

We recall that a 2-cut is a pair of edges in a graph whose removal (or equivalently cutting) disconnects the graph. We also recall that following \eqref{scaling} we are looking for the graphs which maximize the number of faces at fixed number of vertices. Let us denote the maximal number of faces on $V$ vertices 
\begin{equation}
F_{\max}(V) = \max_{\{G\in\mathbbm{G}, V(G) =V\}} F(G)
\end{equation}
and the set of graphs maximizing $F(G)$ at fixed $V$
\begin{equation}
\mathbbm{G}_{\max}(V) = \left\{ G\in\mathbbm{G}, \text{s.t.} \quad V(G) = V \quad \text{and} \quad F(G) = F_{\max}(V) \right\}
\end{equation}

\begin{proposition} \label{prop:2Cut}
Let $G\in\mathbbm{G}$ and $e_1, e_2$ two fermionic lines in $G$ which belong to the same face. If $\{e_1, e_2\}$ is not a 2-cut in $G$, then $G\not\in \mathbbm{G}_{\max}(V(G))$.
\end{proposition}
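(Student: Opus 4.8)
The plan is to prove the contrapositive constructively: from $G$ I will build another Feynman graph $G'\in\mathbbm{G}$ with $V(G')=V(G)$ but $F(G')=F(G)+1$, so that $F(G)<F_{\max}(V(G))$ and hence $G\notin\mathbbm{G}_{\max}(V(G))$. The construction is a \emph{swap} of the two fermionic lines $e_1,e_2$. It is convenient to encode faces combinatorially: at each vertex the $q$ corners glue a fermionic half-edge to one strand of the incident disorder line, so a face is exactly a cycle of the $2$-regular graph obtained by superimposing the fermionic matching and the disorder-strand matching on the set of corners. In this picture every fermionic line lies on a single face, and the hypothesis that $e_1,e_2$ belong to the same face means that the four corners $a,b$ (endpoints of $e_1$) and $c,d$ (endpoints of $e_2$) all lie on one common cycle $C$.

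First I would record the local structure around $C$. Deleting $e_1$ and $e_2$ from $C$ leaves two alternating arcs, $P_1$ joining $b$ to $c$ and $P_2$ joining $d$ to $a$, where I label the endpoints so that the cyclic order along $C$ is $a,b,\dots,c,d,\dots$. There are two ways to reconnect the four freed corners other than the original one, namely $\{a,c\},\{b,d\}$ and $\{a,d\},\{b,c\}$. A direct trace of the superimposed matchings shows that the first reconnection leaves $C$ as a single cycle, so $F$ is unchanged, whereas the second turns $C$ into the two cycles $\{a,d\}\cup P_2$ and $\{b,c\}\cup P_1$; since no other face is touched, this second reconnection yields $F(G')=F(G)+1$. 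I would therefore define $G'$ by performing this face-splitting reconnection. Because the swap only rewires fermionic half-edges, $G'$ is still $q$-regular, still carries exactly one disorder line per vertex, and has $V(G')=V(G)$; thus $G'\in\mathbbm{G}$ as soon as it is connected.

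The only place the hypothesis enters is connectivity, and this is the one genuine, though short, point to check. Since $\{e_1,e_2\}$ is \emph{not} a $2$-cut, the graph $G\setminus\{e_1,e_2\}$ is connected; as $G'$ is obtained from this connected graph by re-adding two edges, $G'$ is connected as well, and the same holds for $G'_0$. Hence $G'\in\mathbbm{G}$ with $V(G')=V(G)$ and $F(G')=F(G)+1$, so $F_{\max}(V(G))\ge F(G)+1>F(G)$ and therefore $G\notin\mathbbm{G}_{\max}(V(G))$. I expect the main care to be needed in the face count rather than in the connectivity: one must verify, using the alternating two-arc structure of $C$, that exactly one of the two nontrivial reconnections splits the common face and that every other face is left intact, so that the net effect is precisely the gain of one face.
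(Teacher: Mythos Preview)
Your swap argument is exactly the paper's first case, but there is a genuine gap in the connectivity step. Membership in $\mathbbm{G}$ requires that $G_0$ (the $q$-regular graph of fermionic lines only, with disorder lines removed) be connected, not merely that $G$ be connected. The hypothesis says $\{e_1,e_2\}$ is not a 2-cut in $G$; this does \emph{not} imply it is not a 2-cut in $G_0$. Indeed, removing $e_1,e_2$ from $G_0$ may disconnect it into two pieces that are held together in $G$ only by disorder lines crossing between them. In that situation your swap produces a $G'$ which is connected as a stranded graph, but whose $G'_0$ has two components, so $G'\notin\mathbbm{G}$ and the inequality $F_{\max}(V(G))\ge F(G')$ is not available. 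Your sentence ``and the same holds for $G'_0$'' is therefore unjustified and can fail.

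The paper treats precisely this missing case separately: when $\{e_1,e_2\}$ is a 2-cut in $G_0$ but not in $G$, it still performs the swap (obtaining $G'\notin\mathbbm{G}$ with $F(G')\ge F(G)$), then picks a disorder line $e_0$ joining the two components of $G'_0$ and \emph{contracts} it, removing its two endpoint vertices and splicing together the fermionic lines paired by its strands. This yields $G''\in\mathbbm{G}$ with $F(G'')=F(G')$ but $V(G'')=V(G)-2$; a single melonic insertion then restores the vertex count while gaining $q-1\ge 1$ faces, giving $\tilde G\in\mathbbm{G}$ with $V(\tilde G)=V(G)$ and $F(\tilde G)>F(G)$. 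You need an argument of this kind (or some other repair) to handle the case where the swap disconnects $G_0$.
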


In other words, if there exist two lines in the same face which do not form a 2-cut, the graph is not dominant at large $N$.

\begin{proof}
There are two cases to distinguish: whether $\{e_1, e_2\}$ is a 2-cut or not in $G_0$.
\begin{description}[style=wide]
\item[$\{e_1, e_2\}$ is not a 2-cut in $G_0$.]
We draw $G$ as
\begin{equation}
G = \begin{array}{c} \includegraphics[scale=.7]{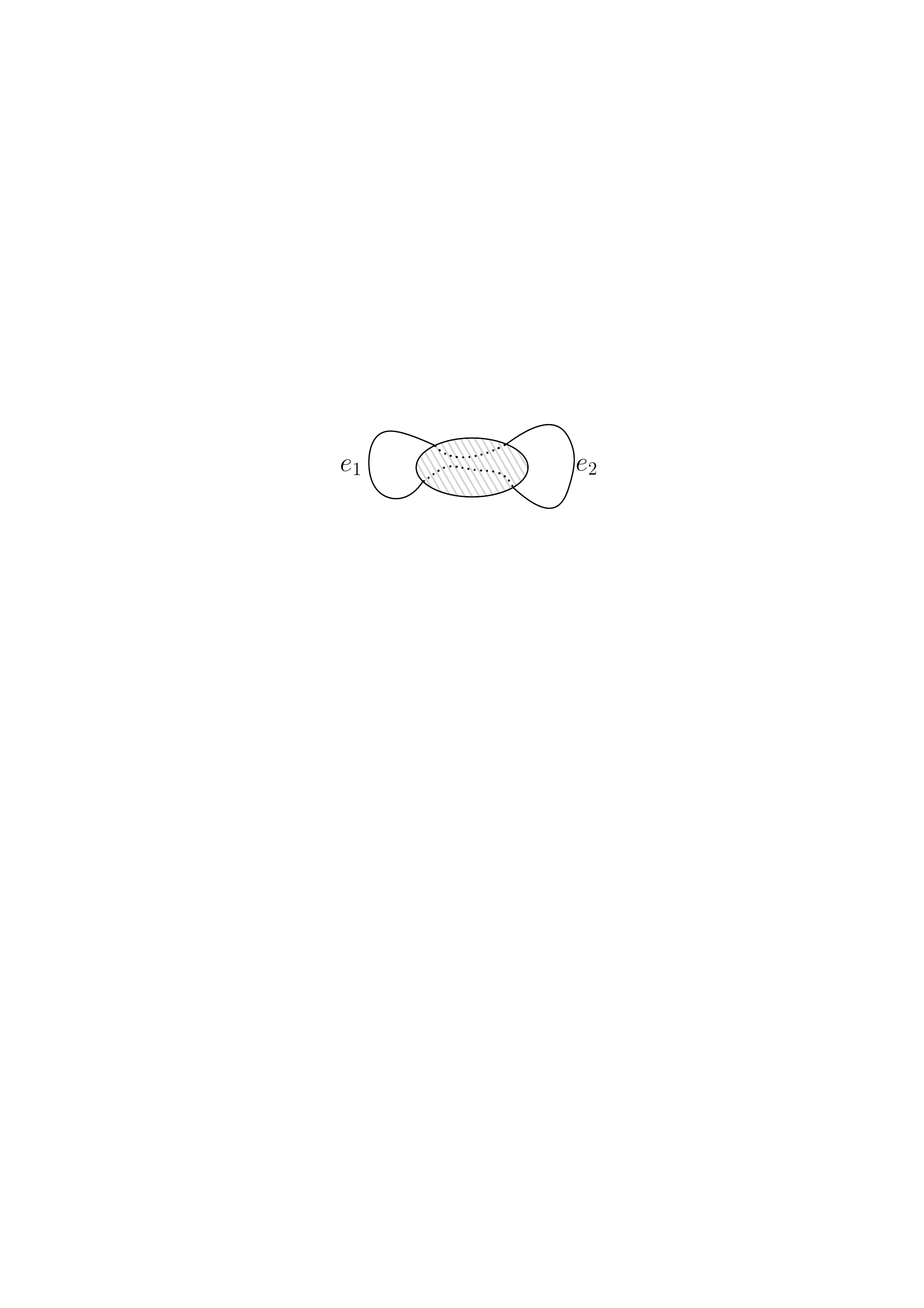} \end{array}
\end{equation}
where the dotted lines represent the paths alternating fermionic lines and strands of disorder lines which constitute the face $e_1$ and $e_2$ belongs to.

Now consider $G'$ obtained by cutting $e_1$ and $e_2$ and regluing the half-lines in the unique way which creates one additional face,
\begin{equation}
G' = \begin{array}{c} \includegraphics[scale=.7]{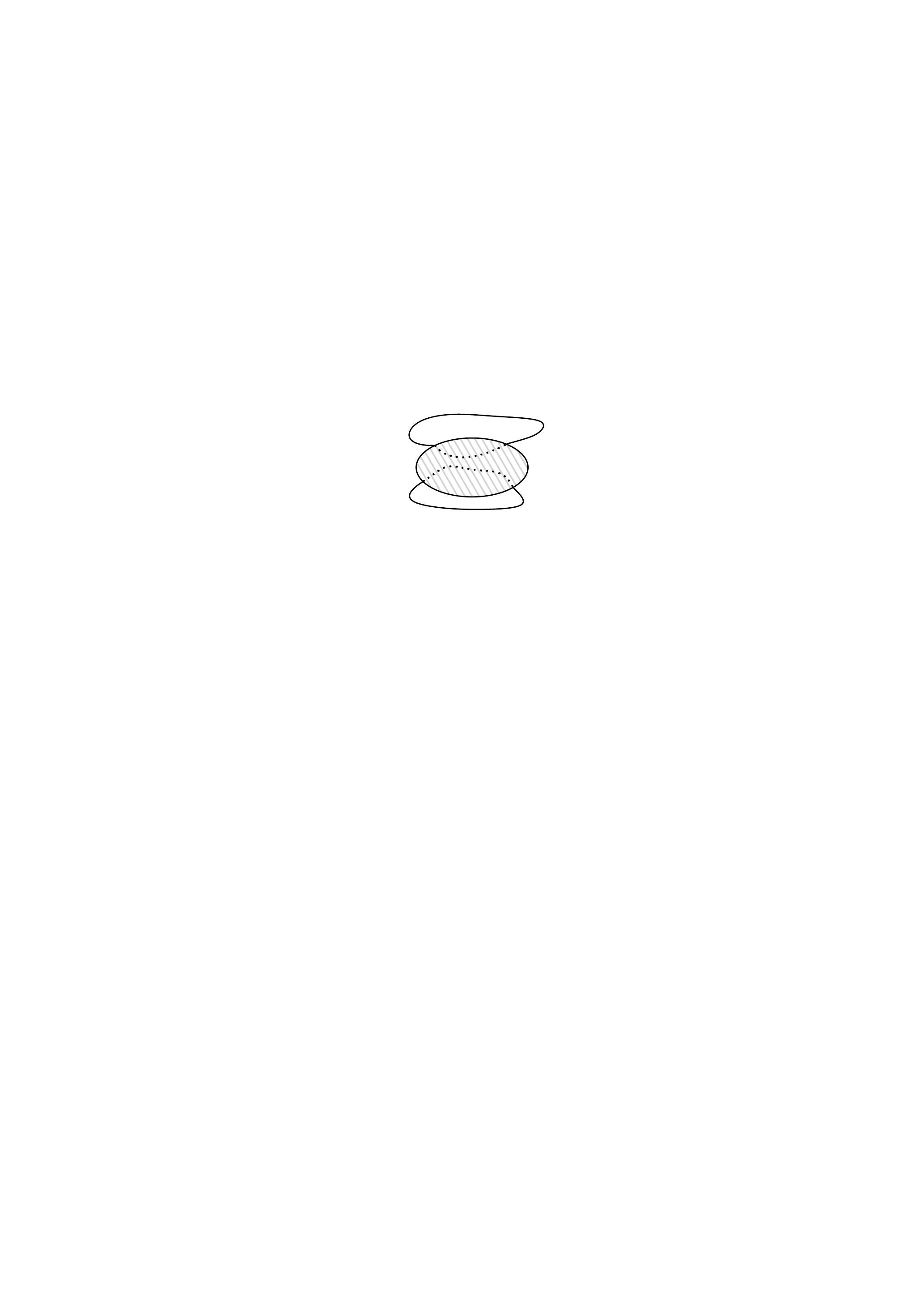} \end{array}
\end{equation}
$G_0'$ is connected since $\{e_1, e_2\}$ is not a 2-cut in $G_0$, and hence $G'\in\mathbbm{G}$. No other faces of $G$ are affected. Therefore $F(G') = F(G)+1$ and thus $G\not\in \mathbbm{G}_{\max}(V(G))$.

\item[$\{e_1, e_2\}$ is not a 2-cut in $G_0$.] 
An example of this situation is when $G_0$ is melonic but $G$ is not because the disorder lines are added in a way which does not respect melonicity.

In this case, $G$ looks like
\begin{equation}
G = \begin{array}{c} \includegraphics[scale=.7]{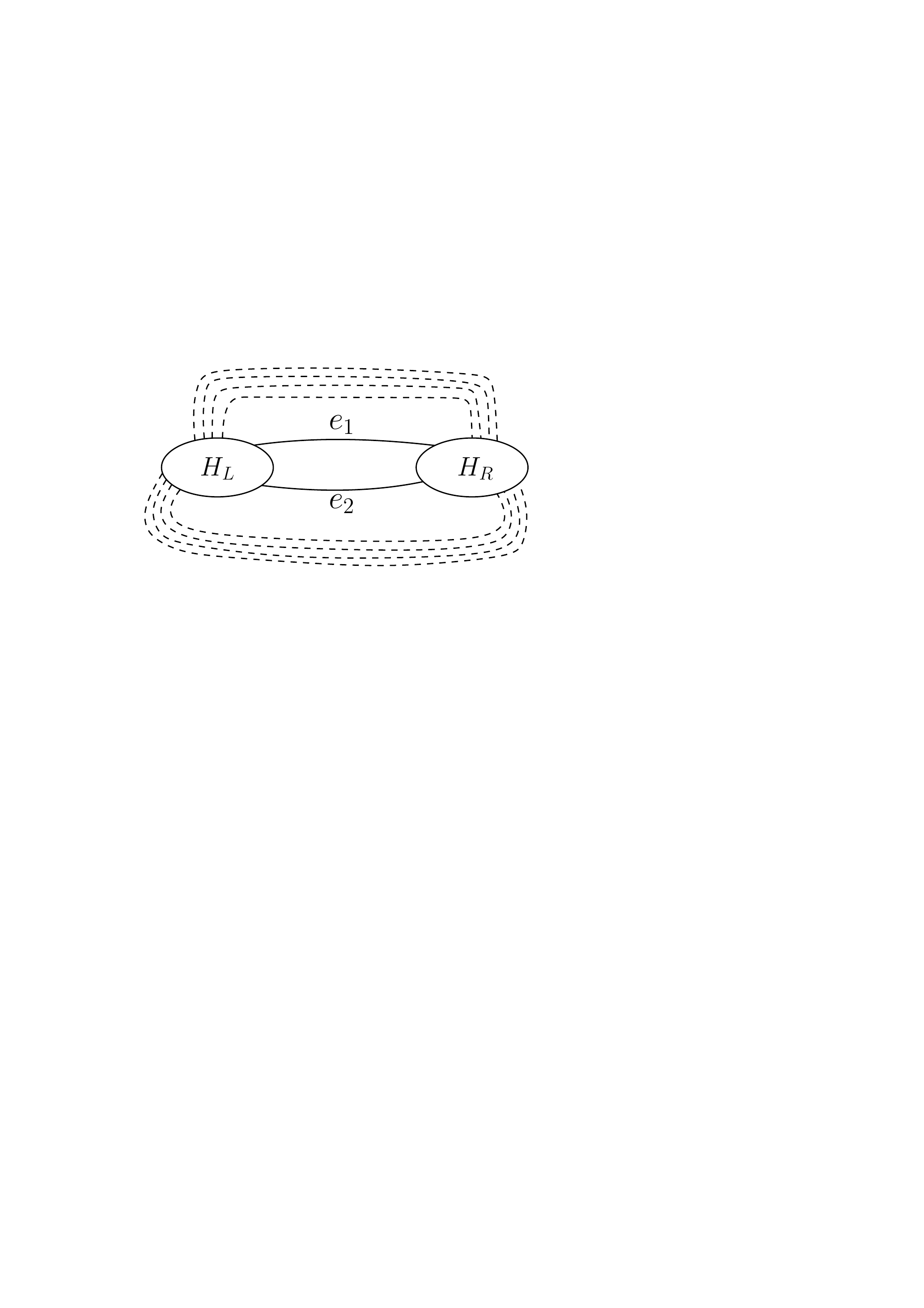} \end{array}
\end{equation}
i.e. $H_L$ and $H_R$ are both connected, and the only lines between them are $e_1, e_2$ and some disorder lines. Consider $G'$ obtained by cutting $e_1$ and $e_2$ and regluing the half-lines as follows
\begin{equation}
G' = \begin{array}{c} \includegraphics[scale=.7]{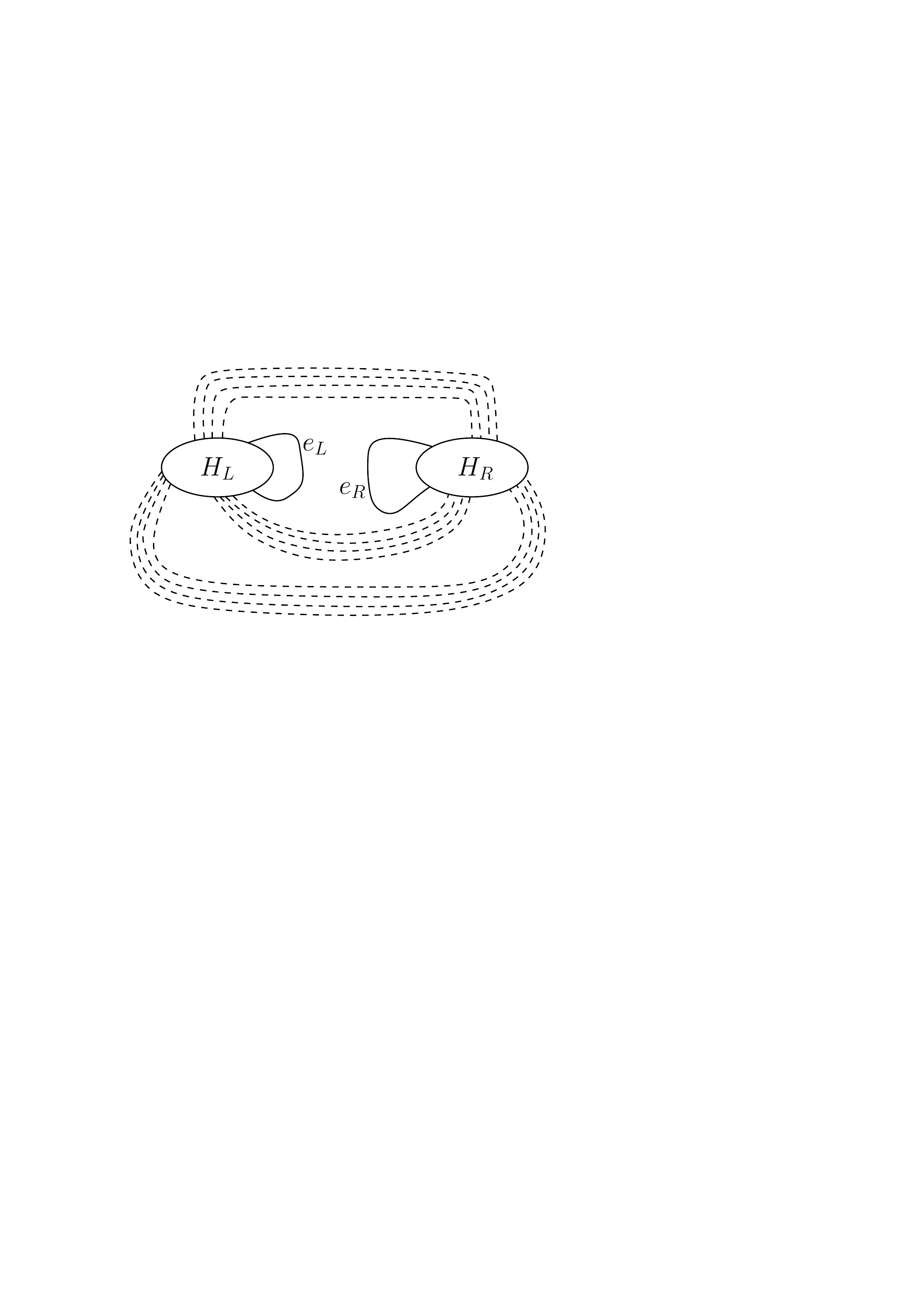} \end{array}
\end{equation}
Notice that $G'\not\in\mathbbm{G}$ since $G'_0$ consists of two connected components $G_{0L}'$ and $G_{0R}'$.

Consider a disorder line $e_0$ between them. It joins two vertices $v_L$ in $G_{0L}'$ and $v_R$ in $G_{0R}'$. We perform the contraction of the disorder line $e_0$ as follows
\begin{equation}
\begin{aligned}
&G' = \begin{array}{c} \includegraphics[scale=.6]{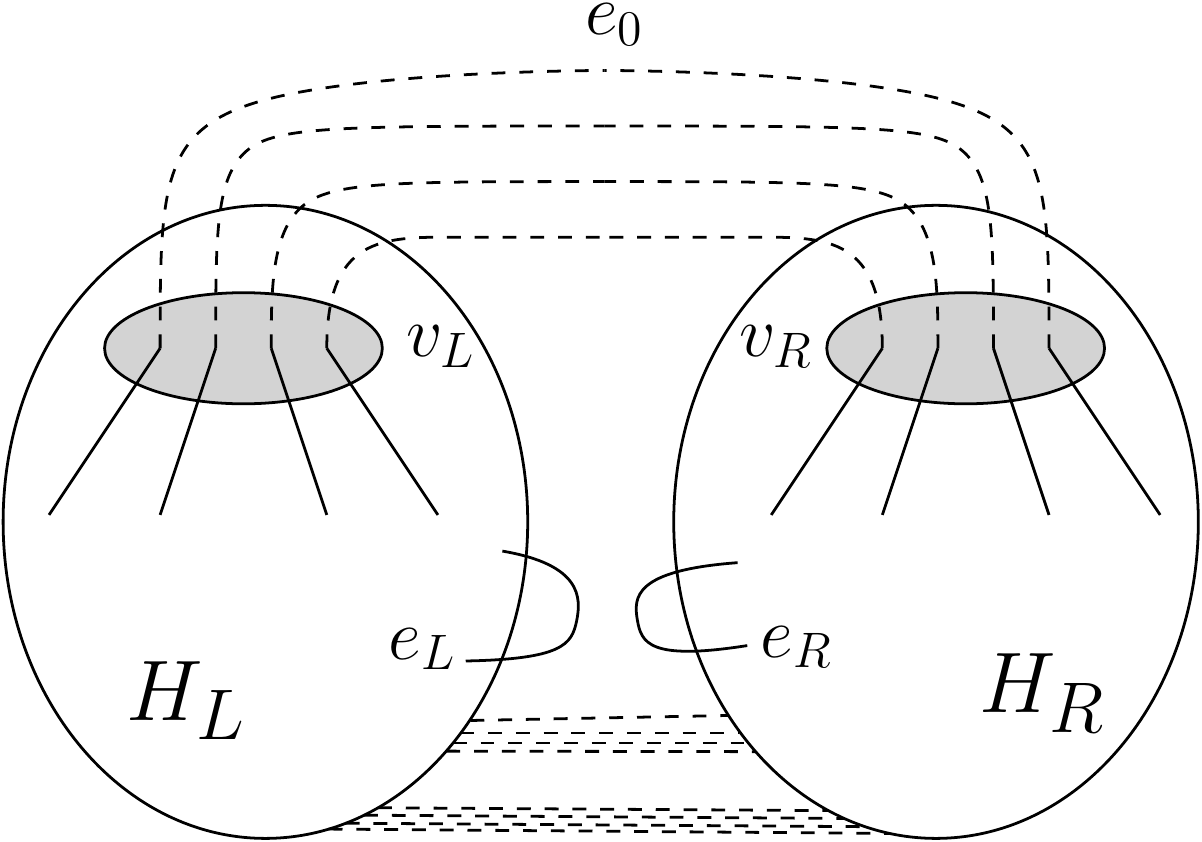} \end{array}\\
\to\quad &G'' = \begin{array}{c} \includegraphics[scale=.6]{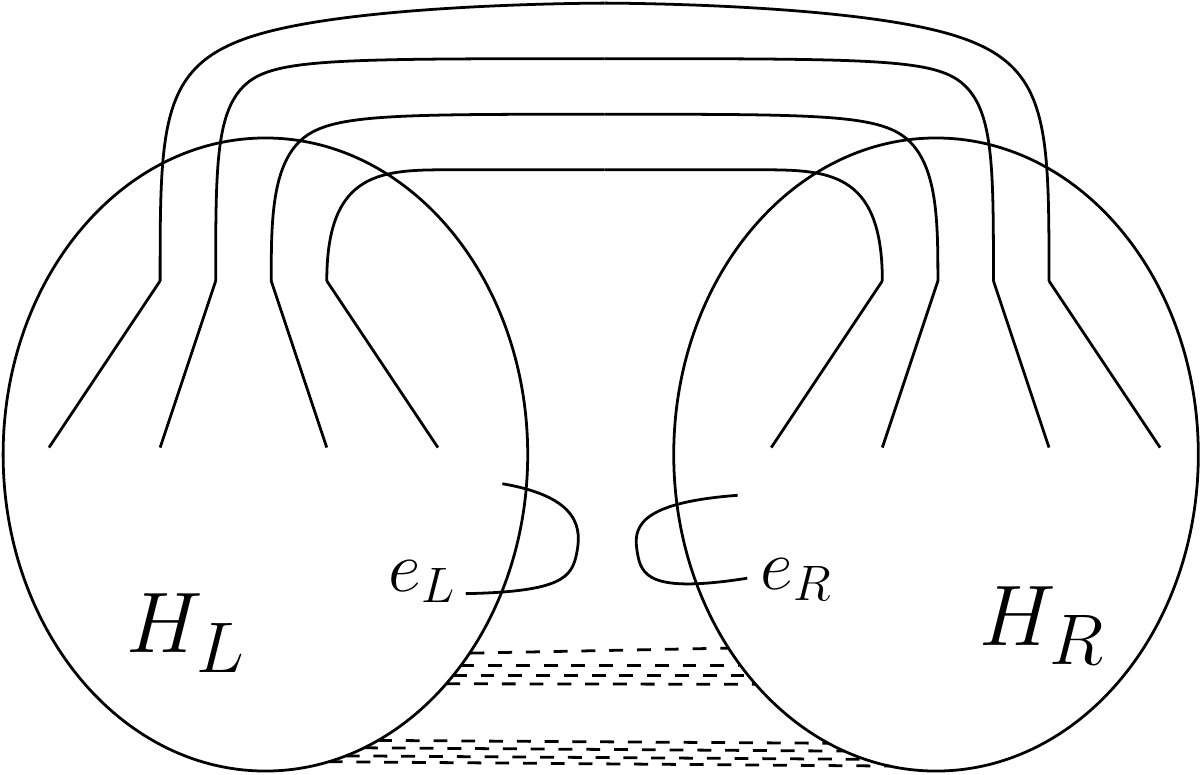} \end{array}
\end{aligned}
\end{equation}
It removes $v_L, v_R$ and $e_0$ and joins the pending fermionic lines which were connected by the strands of $e_0$. The key point is that $G''\in\mathbbm{G}$ now since the contraction of $e_0$ connects the two disjoint components of $G_0'$ by $q$ fermionic lines. 

Let us now analyze the variations of the number of faces from $G$ to $G''$. First from $G$ to $G'$: in $G$ the lines $e_1, e_2$ belong to the same face, while $e_L$ and $e_R$ may or may not belong to the same face in $G'$, hence 
\begin{equation}
F(G) \leq F(G').
\end{equation}
Then the contraction of $e_0$ does not change the number of faces. Here for this to be true, it is key that that $e_0$ connects two disjoint components of $G_0'$. Therefore $F(G) \leq F(G'')$.

To conclude the proof, notice that $G''$ has two vertices less than $G$. Therefore we can perform a melonic insertion on any fermionic line of $G''$ to get a graph $\tilde{G}\in\mathbbm{G}$ with $V(G) = V(\tilde{G})$ and crucially 
\begin{equation}
F(\tilde{G}) = F(G'') + q-1
\end{equation}
as in Proposition \ref{prop:Melons}. For $q>1$ it comes that $F(G)< F(\tilde{G})$ and thus $G\not\in\mathbbm{G}_{\max}(V(G))$.
\end{description}
\qed
\end{proof}

\subsection{Large $N$ limit} \label{sec:Thm}

We now prove that the graphs which are dominant in the large $N$ limit of the SYK model are the melonic graphs.

\begin{theorem} \label{thm}
The weight of $G\in\mathbbm{G}$ is bounded as
\begin{equation}
\delta(G) \leq 1
\end{equation}
Moreover, the graphs such that $\delta(G) = 1$ are the melonic graphs.
\end{theorem}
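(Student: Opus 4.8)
The plan is to prove the bound $\delta(G) \le 1$ together with the characterization of the equality case in a single induction on the number of vertices $V(G)$, using Proposition~\ref{prop:2Cut} as the engine. First I would establish the base case: for $V=2$ the only graph is $G_{\min}$, which is melonic and satisfies $\delta(G_{\min})=1$ by Proposition~\ref{prop:Melons}. For the inductive step I would take $G \in \mathbbm{G}_{\max}(V)$ with $V > 2$ and aim to show that $G$ must be melonic, from which $\delta(G)=1$ follows by Proposition~\ref{prop:Melons}; since $\mathbbm{G}_{\max}(V)$ is exactly where $F(G)$ is maximized, proving these dominant graphs are melonic simultaneously gives $\delta(G)\le 1$ for \emph{all} graphs and pins down the equality case.

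The key structural step is to locate a melonic 2-point insertion inside any $G \in \mathbbm{G}_{\max}(V)$ and strip it off to reduce the vertex count by two. To find it, I would use Proposition~\ref{prop:2Cut} in contrapositive form: because $G$ is dominant, \emph{every} pair of fermionic lines lying in a common face must be a 2-cut. This is a very rigid condition. I would trace the faces of $G$: a face is an alternating cycle of fermionic lines and disorder strands, and since each disorder line carries $q$ strands, the faces are tightly organized around the disorder lines. The idea is that the 2-cut condition, applied to consecutive fermionic lines along a face, forces the fermionic lines emanating from a single disorder line to bundle together so that the two vertices joined by that disorder line, together with their $q$ fermionic lines, form precisely a melonic 2-point subgraph that can be contracted. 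I would argue that at least one disorder line realizes this configuration, so that removing it via the contraction move used in the proof of Proposition~\ref{prop:2Cut} yields $G'' \in \mathbbm{G}$ with $V(G'')=V-2$ and $F(G'')=F(G)-(q-1)$.

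Next I would invoke the induction hypothesis on $G''$: since $G \in \mathbbm{G}_{\max}(V)$ had maximal faces and the contraction lost exactly $q-1$ faces, $G''$ must itself be face-maximal on $V-2$ vertices, hence melonic by induction. Then I would reconstruct $G$ from $G''$ by reversing the contraction, i.e. by a melonic move, and conclude that $G$ is melonic too. Here Proposition~\ref{prop:MelonicGluing} is the safeguard: it guarantees that inserting a melonic 2-point block back onto a fermionic line of a melonic graph keeps the result melonic, so the reconstructed $G$ genuinely lies in the melonic class rather than merely having the right face count.

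The main obstacle I anticipate is the crucial structural claim in the second paragraph: showing that face-maximality \emph{forces} the existence of an extractable melonic insertion, rather than merely forbidding non-2-cut pairs. The 2-cut condition from Proposition~\ref{prop:2Cut} is necessary for dominance, but turning it into the positive statement ``some disorder line caps off a clean melonic 2-point piece'' requires a careful combinatorial argument about how faces thread through the stranded disorder lines—in particular ruling out dominant configurations where the 2-cuts are all ``nested'' so deeply that no innermost melonic block is visible. I would expect to handle this by choosing a minimal or innermost 2-cut (one enclosing the fewest vertices) and arguing that minimality forces the enclosed subgraph to be the elementary two-vertex melon $G_{\min}$ capped by a single disorder line, giving the base insertion to remove. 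Establishing that such a minimal 2-cut always exists and is genuinely melonic, without appealing to colors, is the delicate heart of the proof.
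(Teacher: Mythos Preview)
Your inductive framework, base case, and use of Proposition~\ref{prop:2Cut} in contrapositive form all match the paper's proof. The difference lies in the inductive step, and here the paper takes a shortcut that dissolves precisely the obstacle you flag as ``the delicate heart of the proof.''

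You try to locate an \emph{innermost} two-vertex melon inside $G$, strip it off to drop to $V-2$ vertices, and then invoke the induction hypothesis once. Proving that such a minimal block exists is, as you note, nontrivial: you would need to show that a minimal 2-cut encloses exactly $G_{\min}$, which in turn requires knowing that the enclosed piece $G_L$ is itself face-maximal so that Proposition~\ref{prop:2Cut} applies again inside it. That argument can be made to work, but it is essentially a disguised recursion.

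The paper instead makes the recursion explicit and thereby avoids the search for a two-vertex block altogether. It picks \emph{any} pair $\{e_1,e_2\}$ of fermionic lines in a common face; since this pair is a 2-cut in $G$ (and hence in $G_0$, so no disorder lines cross), cutting it yields two genuine SYK graphs $G_L, G_R \in \mathbbm{G}$, each with strictly fewer vertices. The face count splits as $F(G)=F(G_L)+F(G_R)-1$, so both pieces are maximal, and the induction hypothesis applies to \emph{both} simultaneously. Proposition~\ref{prop:MelonicGluing} then reassembles the two melonic pieces into a melonic $G$. In short: rather than reducing by exactly two vertices, the paper reduces by an arbitrary even split, which is why it never needs to isolate a smallest melon.

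Your route is not wrong, just harder; and if you push your ``minimal 2-cut'' idea through, you will find yourself re-deriving the paper's two-sided recursion in order to certify that the smaller side is itself maximal.
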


Using the notation of Proposition \ref{prop:2Cut}, this gives
\begin{equation}
\bigcup_{\text{$V$ even}} \mathbbm{G}_{\max}(V) = \left\{ G\in\mathbbm{G}, \text{s.t.} \quad \delta(G)=1 \right\} = \left\{ \text{Melonic graphs}\right\}.
\end{equation}

\begin{proof}
We proceed by induction.

First, $G_{\min}$ is melonic by definition. It has $F(G_{\min})=q$ and $V(G_{\min})=2$ hence satisfies $\delta(G_{\min}) = 1$. Since it is the only graph on two vertices, the theorem indeed holds on two vertices.

Let $V\geq 4$ even. We assume the theorem is true up to $V-2$ vertices and consider $G\in\mathbbm{G}$ with $V(G) = V$ vertices. Due to Proposition \ref{prop:2Cut}, we know that all pairs $\{e_1, e_2\}$ with $e_1, e_2$ two fermionic lines belonging in a common face must be 2-cuts in $G$.

Then any such pair $\{e_1, e_2\}$ also is a 2-cut in $G_0$ and $G$ takes the form
\begin{equation}
G = \begin{array}{c} \includegraphics[scale=.6]{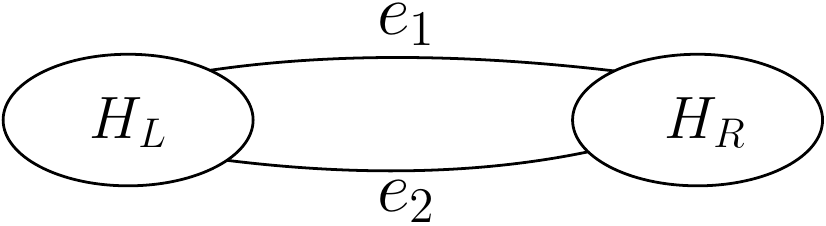} \end{array}
\end{equation}
where $H_L, H_R$ are connected, 2-point graphs. We cut $e_1$ and $e_2$ and glue the resulting half-lines to close $H_L$ and $H_R$ into $G_L, G_R$,
\begin{equation}
G_L = \begin{array}{c} \includegraphics[scale=.6]{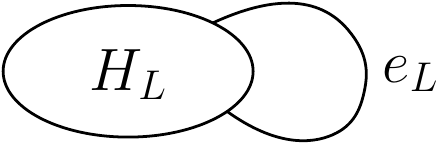} \end{array} \qquad G_R = \begin{array}{c} \includegraphics[scale=.6]{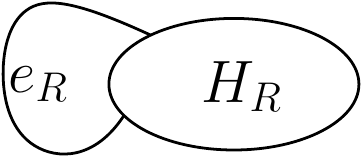} \end{array}
\end{equation}
In the notations of Proposition \ref{prop:MelonicGluing}, we have $H_L = G_L^{(e_L)}$ and $H_R = G_R^{(e_R)}$. Moreover
\begin{equation}
G = H_L \star H_R.
\end{equation}

Since $e_1$ and $e_2$ are in the same face, we have 
\begin{equation}
F(G) = F(G_L) + F(G_R) - 1,
\end{equation}
and $F(G)$ is maximal if and only if $F(G_L)$ and $F(G_R)$ are. From the induction hypothesis, this requires $G_L$ and $G_R$ to be melonic. Then $G = H_L \star H_R$ is melonic too according to Proposition \ref{prop:MelonicGluing}. \qed
\end{proof}

We get as a corollary of Theorem \ref{thm} and Proposition \ref{prop:2Cut} a characterization of melonic graphs.
\begin{corollary}
$G\in\mathbbm{G}$ is melonic if and only if all pairs $\{e_1, e_2\}$ of fermionic lines which belong in a common face are 2-cuts.
\end{corollary}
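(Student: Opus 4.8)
The plan is to prove the two implications separately, using throughout the identification $\{\text{melonic graphs}\} = \{G : \delta(G) = 1\} = \bigcup_{V\text{ even}}\mathbbm{G}_{\max}(V)$ supplied by Theorem \ref{thm}.

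The forward implication is immediate. If $G$ is melonic, then Theorem \ref{thm} gives $\delta(G) = 1$, hence $F(G) = F_{\max}(V(G))$ and $G \in \mathbbm{G}_{\max}(V(G))$. The contrapositive of Proposition \ref{prop:2Cut} says precisely that a graph in $\mathbbm{G}_{\max}(V(G))$ admits no pair of fermionic lines lying in a common face and failing to be a 2-cut; thus every such pair is a 2-cut.

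For the converse I would argue by induction on $V(G)$, re-running the induction of Theorem \ref{thm} but feeding it the 2-cut hypothesis in place of maximality. The base case $V = 2$ is trivial, since $G = G_{\min}$ is melonic. For $V \geq 4$ there is a face carrying two distinct fermionic lines $e_1, e_2$ (otherwise every face would contain a single fermionic line, forcing $\delta(G) = V/2 > 1$ and contradicting the bound of Theorem \ref{thm}). By hypothesis $\{e_1, e_2\}$ is a 2-cut in $G$, hence also in $G_0$, so $G$ decomposes as $G = H_L \star H_R = G_L^{(e_L)} \star G_R^{(e_R)}$ exactly as in the proof of the theorem. By Proposition \ref{prop:MelonicGluing} it then suffices to show that $G_L$ and $G_R$ are melonic, and by the inductive hypothesis this reduces to checking that $G_L$ and $G_R$ again satisfy the 2-cut property.

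The main obstacle is exactly this descent of the 2-cut property to $G_L$ (and symmetrically $G_R$). For a pair of fermionic lines lying in a common face of $G_L$ that avoids the glued line $e_L$, the pair already lies in a common face of $G$ and is a 2-cut there by hypothesis; since $H_R$ sits entirely on one side of that cut, replacing it by the single line $e_L$ cannot reconnect the two sides, so the pair remains a 2-cut in $G_L$. The delicate case is a pair $\{e_L, f\}$ involving the glued line, since $e_L$ has no counterpart in $G$. Here I would note that $G_L - \{e_L, f\} = H_L - f$, and that $f$ lies on the $H_L$-portion of the face $\phi$ through which $e_1, e_2$ run, so that $\{e_1, f\}$ is a 2-cut of $G$ by hypothesis; the disconnection of $G - \{e_1, f\}$ then translates into the disconnection of $H_L - f = G_L - \{e_L, f\}$, exhibiting $\{e_L, f\}$ as a 2-cut of $G_L$. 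Granting this descent, the induction closes and $G$ is melonic.
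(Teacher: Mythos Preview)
Your proof is correct. The paper records the corollary without proof, presenting it as an immediate consequence of Theorem~\ref{thm} and Proposition~\ref{prop:2Cut}; your forward direction is exactly what one extracts from those two statements. For the converse, the paper's induction in Theorem~\ref{thm} descends \emph{maximality} to $G_L,G_R$ via the face-count relation $F(G)=F(G_L)+F(G_R)-1$, not the 2-cut hypothesis itself, so the converse does need the extra verification you supply: that the 2-cut property descends to $G_L$ and $G_R$. Your handling of both cases---pairs $\{f_1,f_2\}$ avoiding $e_L$ (where $H_R$ lies on one side of the cut in $G$, so gluing in $e_L$ cannot reconnect) and pairs $\{e_L,f\}$ (where $\{e_1,f\}$ is a 2-cut of $G$ forcing $H_L-f=G_L-\{e_L,f\}$ to be disconnected)---is sound, as is the observation that some face must contain at least two fermionic lines once $V\geq 4$. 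This is the natural way to fill in what the paper leaves implicit, and it reuses the same inductive skeleton as the proof of Theorem~\ref{thm}.
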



\bibliographystyle{spmpsci}      
\bibliography{MelonsSYK}  
\end{document}